
\documentclass[letterpaper, 10 pt, conference]{ieeeconf}  

\IEEEoverridecommandlockouts                              
\overrideIEEEmargins

\usepackage[utf8]{inputenc}
\usepackage[sorting=none,maxbibnames=99,giveninits]{biblatex} 
\addbibresource{main.bib}
\usepackage{graphicx} 
\usepackage{amsmath}
\usepackage{amsthm}
\usepackage{amssymb}

\newtheorem{Theorem}{Theorem}
\newtheorem{definition}{Definition}

\usepackage{mathtools}

\newcommand\norm[1]{\left\lVert#1\right\rVert}
\newcommand\abs[1]{\lvert#1\rvert}
\newcommand{\defeq}{\vcentcolon=}
\usepackage{multirow}
\usepackage{hhline}
\usepackage{amsfonts} 

\usepackage{accents}
\newcommand{\ubar}[1]{\underaccent{\bar}{#1}}
\usepackage{algorithm}
\usepackage{algpseudocode}

\usepackage[belowskip=-10pt,aboveskip=2pt]{caption}
\setlength{\abovedisplayskip}{2pt}
\setlength{\belowdisplayskip}{4pt}
\setlength{\intextsep}{10pt plus 2pt minus 2pt}

\title{\LARGE \bf
Synthesis and verification of robust-adaptive safe controllers 
}

\author{Simin Liu$^{*1}$ \thanks{$^*$ Denotes equal contribution} \thanks{$^1$Robotics Institute, Carnegie Mellon University, Pittsburgh, PA 15213, USA.}  \and Kai S. Yun$^{*2}$ \thanks{$^2$Mechanical Engineering Department, Carnegie Mellon University, Pittsburgh, PA 15213, USA.} \and John M. Dolan$^1$ \and Changliu Liu$^1$}

\begin{document}

\maketitle
\thispagestyle{empty}
\pagestyle{empty}

\begin{abstract}
Safe control with guarantees generally requires the system model to be known. It is far more challenging to handle systems with uncertain parameters. In this paper, we propose a generic algorithm that can synthesize and verify safe controllers for systems with constant, unknown parameters. In particular, we use robust-adaptive control barrier functions (raCBFs) to achieve safety. We develop new theories and techniques using sum-of-squares that enable us to pose synthesis and verification as a series of convex optimization problems. In our experiments, we show that our algorithms are general and scalable, applying them to three different polynomial systems of up to moderate size (7D). Our raCBFs are currently the most effective way to guarantee safety for uncertain systems, achieving 100\% safety and up to 55\% performance improvement over a robust baseline.

\end{abstract}

\section{Introduction}\label{sec:intro}

Although much work has been done on safety for known systems, comparatively little work has been done for systems whose dynamical models cannot perfectly be known. These uncertain systems are a challenging domain for safe control on both a theoretical and methodological level. Effective safe controllers must meet two demands: (1) provide mathematical guarantees of safety and (2) be high-performance, interfering minimally with other control objectives (i.e., stabilizing, tracking). Currently, many safe control synthesis methods for uncertain systems either lack guarantees or only obtain safety at a high cost to performance. Secondly, algorithms for controller search should be both generic and tractable, so they can handle many different systems of varying dimensions. This work aims to address all of these requirements. 


The technique we use is “sum-of-squares programming” (SOSP)~\cite{papachristodoulou2005tutorial,jarvis2003some}. Any robust-adaptive CBF (raCBF) produced by our SOSP-based algorithm is guaranteed to provide safety. SOSP allows us to pose the search for a valid raCBF, given a partially-known system and a safety requirement, as a constrained optimization problem. This optimization problem can be broken down into a series of convex optimization problems, which can each be solved with an off-the-shelf semidefinite program (SDP) solver. 


\begin{figure}[htbp]
    \centering
    \includegraphics[width=0.45\textwidth]{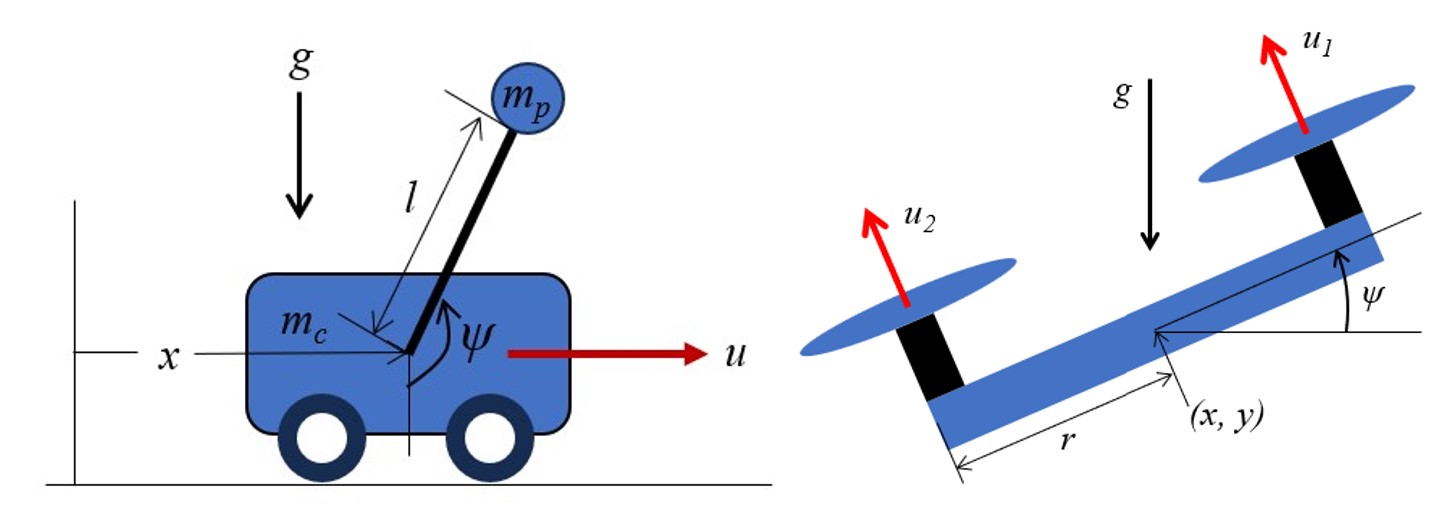}
    \caption{Two of our three test systems: cartpole with unknown joint friction and quadrotor with unknown drag coefficients.}\label{fig:systems_side_by_side}
\end{figure}

Typically, uncertain systems are handled with adaptive, robust, or robust-adaptive control~\cite{lavretsky2012robustAdaptive}. We choose the robust-adaptive paradigm, which combines the merits of robust and adaptive control. In adaptive control, the controller depends on estimates of the unknown system. Typically, the system will be estimated online and the controller {\it adapts} with each new estimate. In adaptive fashion, our raCBF is a function of the estimated parameters as well as the state. This raCBF is also paired with a special parameter estimation law. This controller structure allows it to effectively {\it select} a CBF online with the choice of parameter. The robust-adaptive scheme leverages this to attain {\it selective conservatism}: more conservative CBFs are only ``selected'' when necessary, like when the state suddenly moves toward danger.     

The raCBF also incorporates elements of robust control. In robust control, the controller only depends on worst-case disturbance bounds. Our raCBF follows robust principles by accounting for maximum estimation error. However, it is able to account for it in a less conservative way compared to robust CBF (rCBF), as it assumes parameter estimation occurs. Ultimately, these differences lead our synthesized raCBF to perform better. That is, it provides safety while minimally hindering control efficiency at tasks like stabilization and tracking. 

Our contributions are as follows:
\begin{itemize}
\item Propose the first verification and synthesis methods for adaptive-type safe controllers for uncertain systems.
\item For verification: derive convex-equivalent conditions for an raCBF to provide safety.
\item For synthesis: design a multi-phase algorithm which yields a valid raCBF with a locally optimal invariant set.
\item Illustrate our algorithm's genericness and scalability by applying it to three examples with varying dynamics, dimensions ($\leq$ 7D), and safety specifications.
\item Illustrate our synthesized raCBF's safety and superior  performance (up to $55\%$ improvement over robust baseline) in simulation.
\end{itemize}

\section{Related works}\label{sec:related_works}

In this section, we briefly survey existing works on safety for uncertain systems. We primarily focus on CBF works.


{\bf Theory of (r/a)-CBFs:} these works mathematically define novel forms of CBF (like robust~\cite{xu2015rCBF, jankovic2018robust, breeden2021robust} and adaptive~\cite{taylor2020adaptive, maghenem2021adaptiveSafety, isaly2021multipleBF} CBFs). This category has a different focus than synthesis and verification (systematically generating CBFs). Hence, works in this category typically just use hand-derivation to find CBFs for toy systems, which is an approach that is difficult, if not impossible, to apply to more complex systems. 


{\bf Verification and synthesis of (r/a)-CBFs:} this category, which our work belongs to, deals with developing new theories and algorithms for systematically checking and generating safety functions. While synthesis of ordinary CBFs (i.e., for known systems) has been explored extensively~\cite{zhao2022safety, clark2022semi, dai2022convex}, to our knowledge, there are very few works for robust CBF synthesis~\cite{kang2023verification} and none for adaptive synthesis. Our work is one of a handful that synthesize CBFs for uncertain systems, and the first to synthesize an adaptive-type CBF.   
While our synthesis technique carries guarantees of correctness, there are ad-hoc synthesis works which do not~\cite{liu2023safe, wei2022safe}. There are also sample-based methods for verification, which are unable to scale past 5D due to the curse of dimensionality~\cite{wei2022safe}. Compare this to our SOSP-based verification scheme, which has polynomial time complexity and can scale to 7D and possibly beyond. 



{\bf Online optimization of (r/a)-CBFs:} this body of work typically assumes a valid CBF already exists. Ordinary CBFs compute safe control inputs using quadratic programs (QPs) solved at each time step~\cite{ames2019control, wei2019safe}. Works in this category investigate how best to pose this online optimization (i.e., possibly as a different type of program, like a second-order cone program) for new forms of CBFs~\cite{wei2022persistently, hamdipoor2023safe, long2022safe}. Note that one advantage of our robust-adaptive formulation is that it can still be applied with a QP, which is simple and fast to solve.   


{\bf Ad-hoc safe controllers: } these are works which cannot guarantee safety, because they cannot prove that an admissible safe control input exists for all states and all realizations of the unknown system~\cite{xiao2021adaptive, parwana2022recursive, lyu2021merge}. By contrast, our approach can ensure that there exists a safe control input for {\it all possible conditions}, therefore guaranteeing system safety. 

{\bf Non-CBF approaches: } there are bodies of work on using HJB reachability~\cite{bansal2017hamilton} or set-based methods~\cite{althoff2010computing, althoff2013reachability} to compute robust backward or forward reachable sets. However, a significant drawback of these methods is they scale poorly, often not beyond 5D. 




\section{Problem formulation}\label{sec:prob_formulation}
\subsection{Safe control for uncertain systems}
In this first subsection, we define the problem of safe control for uncertain systems. We also outline system assumptions and our definition of safety. 
We consider a system that is affine in the unknown parameters and the control input: 
\begin{align}
    \dot{x} = f(x) - \Delta(x) \theta + g(x) u \label{eqn:sys_dynamics}\\
    u \in \mathcal{U}, x \in \mathcal{X}, \theta \in \Theta
\end{align}
where $f, g, \Delta$ are polynomial functions in $x$ and $x \in \mathcal{X} \subseteq \mathbb{R}^{n}, u \in \mathcal{U} \subseteq \mathbb{R}^{m}, \theta \in \Theta \subset \mathbb{R}^{p}$. We also assume the set of admissible inputs and the robust parameter range are both hyperrectangles (i.e., $ \ubar{u} \leq u \leq \bar{u}$ and $\ubar{\theta} \leq \theta \leq \bar{\theta}$). We want to enforce safety of this system in the sense of “set invariance”. 

\begin{definition}[Set invariance]
A set $\mathcal{I}$ is forward invariant if all trajectories starting inside $\mathcal{I}$ remain inside the set for all time.
\end{definition}
The set $\mathcal{X}_{\text{safe}}$ is provided as part of the problem specification and is represented implicitly using continuous and smooth functions $b_{i}(x) : \mathbb{R}^{n} \to \mathbb{R}$: 
\begin{definition}[Safety specification]
\begin{equation}
     \mathcal{X}_{\text{safe}} = \{x | b_{i}(x) \geq 0, \; \forall i=1,\ldots,t\} \label{eqn:sys_safety_def}
\end{equation}
\end{definition}

\subsection{Robust-adaptive CBFs}
In this second subsection, we summarize the raCBF formulation and outline the constraints that valid raCBFs must satisfy. We choose the formulation in~\cite{lopez2023unmatched}, which is less conservative than~\cite{xu2015rCBF, jankovic2018robust, breeden2021robust, taylor2020adaptive, isaly2021multipleBF} and compatible with more nominal controllers than~\cite{maghenem2021adaptiveSafety}.  
The raCBF has the form $\phi(x, \hat{\theta})$ where it depends on not only the state, but also on the estimate of the unknown parameters. The estimate, $\hat{\theta}$, is initialized with a guess and then updated online according to:
{\small
\begin{align}
&\dot{\hat{\theta}} = \gamma \nu(\rho) \Delta(x)^{\top} \nabla_{x} \phi(x, \hat{\theta}), \\
&\dot{\rho} = \frac{\nu(\rho)}{\nabla \nu(\rho)} \frac{1}{\phi(x, \hat{\theta}) + \eta}  \left[ -\sigma \rho + w_{\hat\theta}(x) \right], \\
&w_{\hat\theta} = \begin{cases} 0 \text{ if } \nabla_{\hat{\theta}} \phi(x, \hat{\theta})^{\top} \dot{\hat{\theta}} \ge 0\\
-\zeta \nabla_{\hat{\theta}} \phi(x, \hat{\theta})^{\top}  [\gamma \Delta(x)^{\top} \nabla_x \phi(x,\hat{\theta})] \text{ otherwise}
\end{cases}
\end{align}
}where $\gamma$ is an admissible adaptation gain, $\nu(\rho)$ is a scaling function such that $1 \leq \nu(\rho) \leq \zeta < \infty$ for $\zeta > 1$ and also $\nabla \nu(\rho) > 0$. Also, $\eta, \sigma, \zeta \in \mathbb{R}_{>0}$ are design parameters. We assume that $\theta$ belongs to a bounded set $\Theta$. For many systems, we can easily provide a guess for $\Theta$ based on domain knowledge~\cite{lavretsky2012robustAdaptive}. 





\begin{Theorem}\label{thm:valid_raCBF}
Define the set {\small $\mathcal{I}^{r}_{\hat{\theta}} = \{x \in \mathcal{X}, \hat\theta\in\Theta \;|\; \phi(x, \hat{\theta}) \geq \frac{1}{2 \gamma} \tilde{\theta}^{\top} \tilde{\theta} \}$} where $\tilde{\theta}$ is the maximum possible estimation error. {\small $\tilde{\theta} = \bar{\theta} - \ubar{\theta}$}, where {\small $\theta \in [\ubar{\theta}: \bar{\theta}]$}. Then, an raCBF must satisfy the following constraints to ensure invariance of $\mathcal{I}^{r}_{\hat{\theta}}$:
{\small
\begin{align}
    \sup_{u \in \mathcal{U}} \left\{ \nabla_{x} \phi(x, \hat{\theta}) (f - \Delta \hat{\theta} + gu ) \right\} \geq& -\alpha \big( \phi(x, \hat{\theta}) - \tfrac{1}{2\gamma}\tilde{\theta}^{\top}\tilde{\theta} \big) \label{eqn:ctrl_fltr_feas_constraint}\\
    \phi(x, \hat{\theta}) \geq 0 \implies b&_{i}(x) \geq 0, \forall i = 1,\ldots,t. 
    \label{eqn:subset_constraint}
\end{align}
}
\end{Theorem}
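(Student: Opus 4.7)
The plan is a Lyapunov–barrier argument on an augmented function that combines $\phi(x,\hat\theta)$ with a penalty on the parameter estimation error, closed off by a comparison lemma. Condition \eqref{eqn:ctrl_fltr_feas_constraint} will deliver the Lyapunov inequality, and condition \eqref{eqn:subset_constraint} will translate ``$\phi \geq 0$'' into actual safety.

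First, I would let $e := \hat\theta - \theta$ denote the true estimation error. Since $\theta$ is constant, $\dot e = \dot{\hat\theta}$, and since $\theta, \hat\theta \in [\ubar\theta, \bar\theta]$ (the latter enforced by projecting the adaptation law if needed), $e^\top e \leq \tilde\theta^\top \tilde\theta$. Combined with $\nu(\rho)\geq 1$, this yields $\tfrac{1}{2\gamma\,\nu(\rho)}e^\top e \leq \tfrac{1}{2\gamma}\tilde\theta^\top\tilde\theta$. Define the augmented function
\[ V(x,\hat\theta,\rho) := \phi(x,\hat\theta) - \tfrac{1}{2\gamma\,\nu(\rho)}\, e^\top e. \]
Starting in $\mathcal{I}^r_{\hat\theta}$ then forces $V(0)\geq 0$; conversely, if $V(t)\geq 0$ for all $t$, then $\phi(t) \geq 0$, which by \eqref{eqn:subset_constraint} gives $x(t) \in \mathcal{X}_{\text{safe}}$.

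Second, I would differentiate $V$ along closed-loop trajectories. Decomposing $-\Delta\theta = -\Delta\hat\theta + \Delta e$ puts a cross term $\nabla_x\phi\,\Delta e$ into $\dot\phi$; plugging the adaptation law $\dot{\hat\theta}=\gamma\nu(\rho)\Delta^\top\nabla_x\phi$ into the correction term $-\tfrac{1}{\gamma\nu(\rho)}e^\top\dot{\hat\theta}$ that comes from $-\tfrac{d}{dt}\bigl[\tfrac{e^\top e}{2\gamma\nu(\rho)}\bigr]$ cancels that cross term exactly. The residual involves $\nabla_{\hat\theta}\phi\cdot\dot{\hat\theta}$ plus a $\dot\rho$-dependent piece, and the prescribed $\dot\rho$ together with the piecewise definition of $w_{\hat\theta}$ is designed so that this residual is non-negative. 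Condition \eqref{eqn:ctrl_fltr_feas_constraint} then supplies a $u \in \mathcal{U}$ making the remaining term satisfy $\nabla_x\phi(f-\Delta\hat\theta+gu) \geq -\alpha\bigl(\phi - \tfrac{1}{2\gamma}\tilde\theta^\top\tilde\theta\bigr) \geq -\alpha V$, whence $\dot V \geq -\alpha V$. A standard comparison lemma gives $V(t) \geq V(0)\,e^{-\alpha t} \geq 0$, closing the argument.

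The main obstacle is the case analysis for $w_{\hat\theta}$: in the ``good-sign'' case $\nabla_{\hat\theta}\phi\cdot\dot{\hat\theta}\geq 0$, the adaptation contribution is already benign and $w_{\hat\theta}=0$ suffices; in the ``bad-sign'' case, $w_{\hat\theta}$ must drive $\rho$ so that the $\dot\rho$-residual dominates the wrong-signed term. Verifying $\dot V + \alpha V \geq 0$ in both branches, while handling $\eta$ to avoid singularities as $\phi\to 0$, is where almost all of the technical work lies. The overall structure parallels the invariance lemma of \cite{lopez2023unmatched}; the added ingredient here is the worst-case bound $e^\top e \leq \tilde\theta^\top\tilde\theta$, which is what upgrades the purely adaptive result into the robust-adaptive statement of the theorem.
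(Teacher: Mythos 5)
Your plan reconstructs from scratch a result the paper does not actually prove: its entire proof of Theorem~\ref{thm:valid_raCBF} is the one-liner ``Eq.~(\ref{eqn:ctrl_fltr_feas_constraint}) by Thm.~1 in~\cite{lopez2023unmatched} and Eq.~(\ref{eqn:subset_constraint}) by definition of safety.'' So you are not diverging from the paper so much as supplying the argument it outsources to the citation. Your skeleton is the right one and is essentially how the cited theorem is established: augment $\phi$ with a $\nu(\rho)$-scaled penalty on the true estimation error, use the adaptation law $\dot{\hat\theta}=\gamma\nu(\rho)\Delta^{\top}\nabla_x\phi$ to cancel the cross term $\nabla_x\phi\,\Delta e$ arising from $-\Delta\theta=-\Delta\hat\theta+\Delta e$, invoke the worst-case bound $e^{\top}e\le\tilde\theta^{\top}\tilde\theta$ together with $\nu(\rho)\ge 1$ to relate the augmented function to the robust margin $\tfrac{1}{2\gamma}\tilde\theta^{\top}\tilde\theta$, and close with a comparison lemma; the second half of the theorem is exactly your observation that $V\ge 0\Rightarrow\phi\ge 0$ plus Eq.~(\ref{eqn:subset_constraint}). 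What your write-up buys over the paper's is that it actually exhibits where robustness enters (the single inequality $e^{\top}e\le\tilde\theta^{\top}\tilde\theta$ upgrading the adaptive result), which the paper leaves implicit.

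Two points separate this from a complete proof. First, the residual you defer is not merely technical: substituting $\dot\rho$ into $\tfrac{\nabla\nu(\rho)\,\dot\rho}{2\gamma\nu(\rho)^2}e^{\top}e$ leaves a term proportional to $-\sigma\rho+w_{\hat\theta}$, and the $-\sigma\rho$ part is \emph{nonpositive}, so the residual is not nonnegative on its face; the argument must absorb it (and the bad-sign branch of $w_{\hat\theta}$) into the $-\alpha V$ slack using the $\phi+\eta$ denominator and the bound on $e^{\top}e$. That absorption is the actual content of the cited Thm.~1 and is the step a referee would ask to see. Second, your comparison-lemma endgame yields $\phi(t)\ge\tfrac{1}{2\gamma\nu(\rho)}e^{\top}e\ge 0$, which (via Eq.~(\ref{eqn:subset_constraint})) gives safety of trajectories started in $\mathcal{I}^r_{\hat\theta}$, but the theorem as phrased claims forward invariance of $\mathcal{I}^r_{\hat\theta}$ itself, i.e.\ $\phi(t)\ge\tfrac{1}{2\gamma}\tilde\theta^{\top}\tilde\theta$ for all $t$. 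Since that threshold dominates $\tfrac{1}{2\gamma\nu(\rho)}e^{\top}e$, your conclusion is strictly weaker than the stated one; recovering the literal invariance claim requires a Nagumo-type argument on the boundary $\phi=\tfrac{1}{2\gamma}\tilde\theta^{\top}\tilde\theta$, reconciling the estimated drift in Eq.~(\ref{eqn:ctrl_fltr_feas_constraint}) with the true drift through the same cancellation. You should either prove that stronger statement or note explicitly that the safety conclusion is what the construction delivers.
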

\begin{proof}
    Eq.~(\ref{eqn:ctrl_fltr_feas_constraint}) by Thm. 1 in~\cite{lopez2023unmatched} and Eq.~(\ref{eqn:subset_constraint}) by defintion of safety. 
\end{proof}

Eq.~(\ref{eqn:ctrl_fltr_feas_constraint}) says that the raCBF must be a function that defines a robust invariant set under the true dynamics. Specifically, the robust set $\mathcal{I}^{r}_{\hat{\theta}}$ is invariant iff there always exists an admissible input to repel the state from the set boundary. This is implied by Eq.~(\ref{eqn:ctrl_fltr_feas_constraint}), which requires {\small $\sup_{u \in \mathcal{U}} \dot{\phi} \geq 0$} (repulsion) when {\small $\phi = \frac{1}{2 \gamma} \tilde{\theta}^{\top} \tilde{\theta}$} (state at the boundary). Now, typically, the invariant set is considered $\{x | \phi \geq 0 \}$. The robust set is shrunk from this typical set by a margin proportional to the maximum estimation error. Intuitively, this triggers safe control earlier than is typical, which mitigates the impact of estimation error. 

Additionally, the raCBF must obey the safety specification. Eq.~(\ref{eqn:subset_constraint}) says that the raCBF may define an invariant set that is no larger than $\mathcal{X}_{\text{safe}}$. Often, $\mathcal{X}_{\text{safe}}$ itself cannot be an invariant set. There will be some states within $\mathcal{X}_{\text{safe}}$ that will leave the set for any admissible control input. 




Given a valid raCBF, a controller design is immediate. We simply apply the raCBF as an optimization-based safety-filter on top of a nominal control policy. Since any control input satisfying  Eq.~(\ref{eqn:ctrl_fltr_feas_constraint}) will keep the system safe, this filter simply projects the nominal control signal to the space of $u$ satisfying Eq.~(\ref{eqn:ctrl_fltr_feas_constraint}). 

\begin{definition}[raCBF quadratic program (raCBF-QP)]
Given nominal controller $u_{\text{ref}} = \pi(x)$,
\begin{align}
    & \min_{u \in \mathcal{U}} \norm{u_{\text{ref}} - u}_2^2 \\
      \nabla_{x} \phi(x, \hat{\theta}) (f - \Delta &\hat{\theta} + gu )  \geq  -\alpha \left( \phi(x, \hat{\theta}) - \tfrac{1}{2\gamma}\tilde{\theta}^{\top}\tilde{\theta} \right). 
\end{align}
\end{definition}


Notice that the filtered input is computed using the {\it estimated} dynamics. This is why we require the raCBF, and the filtered input, by consequence, to be robust to estimation error.

\section{Preliminaries: SOSP}
We use sum-of-squares programming (SOSP) as the optimization framework for verification and synthesis. In the following section, we briefly summarize key definitions and theorems for SOSP. In general, checking polynomial non-negativity is NP-hard. However, checking that some $p(x) \in \mathcal{R}_n$ (the set of polynomials in $n$ variables) is an SOS polynomial (Def.~\ref{defn:sospoly}) is completely tractable~\cite{parrilo2000structured}. 
\begin{definition}[Sum-of-squares polynomial]
\label{defn:sospoly}
A polynomial $p(x)$ is SOS if $p(x) = \sum_{i} q_{i}(x)^2$ for some polynomials $q_{i}$. Clearly, this implies $p(x) \geq 0$. We denote the set of SOSP as $\Sigma$. 
\end{definition}

Hence, a popular kind of optimization problem is the SOSP, which has constraints of this form (enforcing a polynomial to be sum-of-squares). 

\begin{definition}[Sum-of-squares program]
An SOSP is a convex optimization problem of the following form for a given cost vector $c \in \mathbb{R}^{p \times q}$
\begin{align}
    \min_{w}\;\; &c^{\top} w \label{eqn:sosp_objective}\\
    &a^{i}_{0}(x) + \sum_{\forall j \in 1:p} w^{i}_{j} a^{i}_{j}(x)  \in \Sigma \;\;, \; \forall i =1,\ldots, q\label{eqn:sosp_constraints}
\end{align}
where $a^{i}_{j}(x) \in \mathcal{R}_n$.
\end{definition}

The definition above says that an SOSP constrains $q$ polynomials to be sum-of-squares. These polynomials may have variable coefficients ($w^i_j$) and the objective is linear in these coefficients. Next, we introduce the cornerstone theorem, the Positivstellensatz (P-satz), which allows us to transform all different kinds of constraints (e.g., logical implications of polynomial inequalities) into the standard SOSP constraint form of Eq.~(\ref{eqn:sosp_constraints}). 

\begin{Theorem}[Positivstellensatz~\cite{stengle1974nullstellensatz}]
Given polynomials $\{f_1,\ldots,f_r\}$, $\{g_1,\ldots,g_t\}$, and $\{h_1,\ldots,h_u\}$ in $\mathcal{R}_n$, the following are equivalent: \\
(a) The set 
    \begin{align*}
    \left\{
    x \in \mathbb{R}^{n}
    \middle| \; 
            \begin{aligned}
            & f_{1}(x) \geq 0, \ldots, f_{r}(x) \geq 0\\      
            & g_{1}(x) \neq 0, \ldots, g_{t}(x) \neq 0 \\
            & h_{1}(x) = 0, \ldots, h_{u}(x) = 0
            \end{aligned}
    \right\} 
    \end{align*}
is empty. \\
(b) There exist polynomials $f \in \mathcal{P}(f_1,\ldots,f_r), g \in \mathcal{M}(g_1,\ldots,g_t), h \in \mathcal{I}(h_1,\ldots,h_u)$ such that 
\begin{equation}
    f + g^2 + h = 0
\end{equation}
where $\mathcal{M}(g_1,\ldots,g_t)$ is the multiplicative monoid generated by the $g_i$'s and consists of all their finite products; $\mathcal{P}(f_1,\ldots,f_r)$ is the cone generated by the $f_i$'s:
\begin{align}
    \mathcal{P}(f_1,\ldots,f_r) = 
\left\{  s_0 + \sum_{i=1}^{l} s_i b_i \middle\vert \begin{array}{l} l \in \mathbb{Z}_{+}, s_i \in \Sigma_n,  \nonumber \\  b_i \in \mathcal{M}(f_1,\ldots,f_r) \end{array} \right\}
\end{align}
and finally, $\mathcal{I}(h_1,\ldots,h_u)$ is the ideal generated by the $h_i$'s: 
\begin{equation}
\mathcal{I}(h_1,\ldots,h_u) = \left\{ \sum h_k p_k \middle| p_k \in \mathcal{R}_n \right\}.
\end{equation}
\end{Theorem}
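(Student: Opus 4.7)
The plan is to handle the two directions of the equivalence separately. The forward implication (b) $\Rightarrow$ (a) is an elementary contradiction argument, while the converse (a) $\Rightarrow$ (b) requires the full machinery of orderings of commutative rings together with the Tarski--Seidenberg transfer principle.

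For (b) $\Rightarrow$ (a), I would argue by contradiction. Suppose both (b) and the existence of a point $x$ in the set of (a) hold. Evaluating the identity $f + g^2 + h = 0$ at $x$: each $f_i(x) \geq 0$ forces $f(x) \geq 0$, since $f$ lies in the cone generated by the $f_i$'s (a nonnegative combination with SOS coefficients of products of $f_i$'s); each $g_j(x) \neq 0$ forces $g(x)^2 > 0$, since $g$ is a finite product of nonzero values; and each $h_k(x) = 0$ forces $h(x) = 0$, since $h$ is an $\mathcal{R}_n$-linear combination of the $h_k$. Adding gives $f(x) + g(x)^2 + h(x) > 0$, contradicting the identity.

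For (a) $\Rightarrow$ (b), I would proceed by contrapositive: assume no certificate of the form (b) exists and construct a real point satisfying (a). Let $P = \mathcal{P}(f_1,\ldots,f_r)$, $M = \mathcal{M}(g_1,\ldots,g_t)$, and $I = \mathcal{I}(h_1,\ldots,h_u)$, and form the combined preorder $C = P + M^2 + I$ inside $\mathcal{R}_n$, where $M^2$ denotes squares of elements of $M$. Failure of (b) is equivalent to $-1 \notin C$. The first major step is a Zorn's-lemma argument: any proper preorder that does not contain $-1$ can be enlarged to a maximal such preorder $C^*$, and a maximal proper preorder of $\mathcal{R}_n / \mathfrak{p}$ for a suitable real prime $\mathfrak{p}$ is the positive cone of a total ordering on the fraction field of that quotient.

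The second step is to pass to the real closure $R$ of this ordered field. In $R$, the images of $f_i$ are nonnegative, the images of $g_j$ are nonzero (they have invertible squares in $C^*$), and the images of $h_k$ vanish, so the images of the indeterminates yield an $R$-point satisfying all the listed conditions. The main obstacle, and the deepest step, is then transferring this solution from $R^n$ back to $\mathbb{R}^n$: the existence of $x$ satisfying a finite conjunction of polynomial (in)equalities is a first-order sentence in the language of ordered fields, so by the Tarski--Seidenberg transfer principle it holds in $R$ iff it holds in $\mathbb{R}$. Transfer then produces the desired $x \in \mathbb{R}^n$, contradicting the emptiness asserted in (a) and completing the proof.
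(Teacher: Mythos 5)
The paper does not prove this theorem at all: it is stated as a cited classical result (Stengle, 1974) and used as a black box, so there is no in-paper argument to compare against. Your outline is the standard textbook route (as in Bochnak--Coste--Roy or Marshall): the direction (b) $\Rightarrow$ (a) by evaluation is complete and correct, including the observation that the empty product $1 \in \mathcal{M}$ still yields $g(x)^2 > 0$. For (a) $\Rightarrow$ (b) you correctly identify the three essential ingredients --- extension of a preordering via Zorn's lemma to an ordering with real prime support, passage to the real closure of the resulting ordered field to obtain a point over some real closed field $R$, and Tarski--Seidenberg transfer back to $\mathbb{R}^n$.

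One technical misstep worth flagging in the hard direction: the set $C = \mathcal{P} + \mathcal{M}^2 + \mathcal{I}$ is not a preordering (it is not closed under addition or multiplication, since $\mathcal{M}^2$ is not), so the Zorn's-lemma argument cannot be run on $C$ directly, and the claimed equivalence ``failure of (b) iff $-1 \notin C$'' only goes one way as stated ($-1 \in C$ implies a certificate, via $(f+1) + g^2 + h = 0$, but a certificate does not obviously put $-1$ in $C$). The standard repair is to work with the genuine preordering $T = \mathcal{P} + \mathcal{I}$ and phrase the failure of (b) as $T \cap (-\mathcal{M}^2) = \emptyset$; one then extends $T$ by Zorn's lemma to a preordering maximal among those disjoint from $-\mathcal{M}^2$, shows this maximal object is a prime ordering whose support contains no $g_j$, and proceeds exactly as you describe. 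With that adjustment your sketch is a faithful outline of the accepted proof, though of course each of the three main steps (maximal preorderings are orderings, the real-closure construction, and Tarski--Seidenberg) is itself a substantial theorem.
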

  
The P-satz allows us to convert statements in the form of (a) to the form of (b), which can then straightforwardly be turned into an SOSP-type constraint. However, it is often inconvenient to apply P-satz directly, since it generates complex constraints (e.g., the cone adds $2^r$ terms, each with a polynomial multiplier). Instead, we can use a simplification of P-satz called the S-procedure: 

\begin{definition}[S-procedure~\cite{parrilo2000structured}]
\label{def:s_procedure}
A sufficient condition for proving $p(x) \geq 0$ on the semi-algebraic set $K = \{x \in \mathbb{R}^{n} | b_{1}(x) \geq 0, \ldots, b_{m}(x) \geq 0\}$ is the existence of polynomials $r_{i}(x), i = 0, \ldots, m$ that satisfy 
\begin{align}
    (1 + r_{0}(x)) p(x) - \sum_{i=1}^{m} r_{i}(x) b_{i}(x) \in \Sigma \\
    r_{i}(x) \in \Sigma, i=0,\ldots,m.
\end{align}
\end{definition}


Frequently, we simplify this condition further by taking $r_{0}(x) = 0$. Together, the concepts from this section allow us to write the constraints which define valid raCBFs in the form of SOSP constraints, which makes them amenable to verification and synthesis.

\section{Methodology}\label{sec:methodology}
In the following sections, we show how a valid raCBF can be characterized using several SOSP constraints. With this, we can pose raCBF verification as an SOSP; this allows us to efficiently certify whether a given raCBF is valid. Additionally, we propose an algorithm for synthesizing an raCBF, given the partially unknown system and the safety definition. The algorithm is a sequence of SOSPs that are iteratively solved until convergence.  

\subsection{raCBF verification}\label{sec:methodology:verification}
In this section, we derive an SOSP that allows us to verify if a given raCBF satisfies Theorem~\ref{thm:valid_raCBF}. Our goal is to convert Eq.~(\ref{eqn:ctrl_fltr_feas_constraint}),~(\ref{eqn:subset_constraint}) into the form of Eq.~(\ref{eqn:sosp_constraints}).

Looking first at Eq.~(\ref{eqn:ctrl_fltr_feas_constraint}), we see that it requires us to show that for all $x \in \mathcal{X}, \theta \in \Theta$, there exists some $u \in \mathcal{U}$ satisfying the constraint. The ``exists'' quantifier complicates the conversion, as theorems like P-satz and S-procedure can only support “for all” quantifiers. However, it turns out that for any $x$, we can identify which $u \in \mathcal{U}$ achieves the supremum in Eq.~(\ref{eqn:ctrl_fltr_feas_constraint}) - call it $u^{*}(x)$. Therefore, we only need to check whether Eq.~(\ref{eqn:ctrl_fltr_feas_constraint}) is satisfied for $u^{*}(x)$ for all $x$. 

We derive $u^{*}(x)$ by regarding Eq.~(\ref{eqn:ctrl_fltr_feas_constraint}) as a constrained optimization over the input $u$. In fact, it is a special kind of optimization problem with an affine objective and a compact, convex constraint set. Hence, the maximizer always occurs at a vertex of the constraint set~\cite{dantzig2002linear}, $v^j$ where  $v_i^j \in \{\ubar{u}_i, \bar{u}_i\}$. For each $x$, we can further identify which vertex is the maximizer. First, we write Eq.~(\ref{eqn:ctrl_fltr_feas_constraint}) more clearly in control-affine form. Let {\small $c(x, \hat{\theta}) \defeq \nabla_{x} \phi(x, \hat{\theta}) (f - \Delta \hat{\theta} ) + \alpha ( \phi(x, \hat{\theta}) - \tfrac{1}{2\gamma}\tilde{\theta}^{\top}\tilde{\theta} )$} and {\small $d(x, \hat{\theta}) \defeq \nabla_{x} \phi(x, \hat{\theta}) g(x)$}, so that Eq.~(\ref{eqn:ctrl_fltr_feas_constraint}) can be rewritten as {\small $\sup_{u \in \mathcal{U}} c(x, \hat{\theta}) + d(x, \hat{\theta}) u$}. For $i = 1, \ldots, m$, if $d(x, \theta)_i \geq 0$, i.e., the $i$\textsuperscript{th} entry of vector $d(x, \theta)$ is nonnegative, then $u^{*}_{i}(x) = \bar{u}_i$ and vice versa. There are then $2^m$ groups of $x$ with distinct sign patterns on $d(x, \theta)$ that have distinct maximizers $u^{*}(x)$. Hence, we can write $2^m$ constraints for all the groups of $x$.

We give an example for $m = 1$, although the general case is an immediate extension. We have two constraints:
\begin{align}
    & \forall x \in \mathcal{X}, \theta \in \Theta, \;\;\text{s.t.}\;\; \phi(x, \theta) \geq \beta^{-}: \nonumber \\
    & d(x, \theta) \geq 0 \implies c(x, \theta) + d(x, \theta) \bar{u} \geq 0, \\
    & d(x, \theta) \leq 0 \implies c(x, \theta) + d(x, \theta) \ubar{u} \geq 0.
\end{align}

Applying the S-procedure gives us the following SOSP constraints: 
\begin{align}
    c + d \bar{u} - s_1 d-s_2(\phi - \beta^{-}) &\in \Sigma, \;\; s_1, s_2 \in \Sigma, \label{eqn:sosp_constraint_u1} \\ 
    c + d \ubar{u} + s_3 d-s_4(\phi - \beta^{-}) &\in \Sigma, \;\; s_3, s_4 \in \Sigma. \label{eqn:sosp_constraint_u2} 
\end{align}

Likewise, we can also apply the S-procedure to \eqref{eqn:subset_constraint} to yield:
\begin{align}
    \phi(x, \theta) - r_i b_i \in \Sigma ,\;\; \forall i=1,\ldots,t \;,\; r_i \in \Sigma. \label{eqn:sosp_constraint_subset}
\end{align}


\begin{Theorem}[raCBF verification]
For the case where $m=1$\footnote{The case for general $m$ is an immediate extension, but omitted for notational brevity.}, a polynomial $\phi(x, \hat{\theta})$ is a valid raCBF, providing set invariance and finite-time convergence in $\{x | \phi \geq \beta^{-}\}$, if there exists $s_{1:4}, r_i$ such that the convex constraints in Eq.~(\ref{eqn:ctrl_fltr_feas_constraint}) and~(\ref{eqn:subset_constraint}) are satisfied. This can be determined by seeing if the SOSP consisting of constraints~(\ref{eqn:sosp_constraint_u1}),~(\ref{eqn:sosp_constraint_u2}),~(\ref{eqn:sosp_constraint_subset}) is feasible.         
\end{Theorem}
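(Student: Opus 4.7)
The plan is to reduce the theorem to Theorem~\ref{thm:valid_raCBF} by showing that feasibility of the SOSP consisting of~(\ref{eqn:sosp_constraint_u1}),~(\ref{eqn:sosp_constraint_u2}),~(\ref{eqn:sosp_constraint_subset}) implies the two raCBF conditions~(\ref{eqn:ctrl_fltr_feas_constraint}),~(\ref{eqn:subset_constraint}). Since these conditions characterize a valid raCBF by Theorem~\ref{thm:valid_raCBF}, establishing this implication completes the proof. I would structure the argument as two independent reductions: one for the control-feasibility condition~(\ref{eqn:ctrl_fltr_feas_constraint}), and one for the safety-containment condition~(\ref{eqn:subset_constraint}).

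For~(\ref{eqn:ctrl_fltr_feas_constraint}), the first step is to eliminate the $\sup_{u \in \mathcal{U}}$ so that only universal quantifiers remain (as required by S-procedure). Using the control-affine rewriting $c(x,\hat{\theta}) + d(x,\hat{\theta})u$ from the excerpt and the fact that a linear function over a box attains its maximum at a vertex, the maximizer $u^{*}(x)$ is determined pointwise by the sign pattern of $d$. For $m=1$, this partitions $\mathcal{X}\times\Theta$ into the two semi-algebraic regions $\{d \ge 0\}$ and $\{d \le 0\}$, and the supremum condition becomes the pair of implications written just above~(\ref{eqn:sosp_constraint_u1}). Within each region, I would further restrict attention to the set $\{\phi \ge \beta^{-}\}$, since~(\ref{eqn:ctrl_fltr_feas_constraint}) only needs to hold on the invariant set boundary and beyond. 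The key step is then to apply Definition~\ref{def:s_procedure} with $r_0 = 0$ to each implication, taking the hypothesis polynomials $\{d, \phi - \beta^{-}\}$ in the first case and $\{-d, \phi-\beta^{-}\}$ in the second; the multipliers become $s_1, s_2$ and $s_3, s_4$ respectively, producing exactly~(\ref{eqn:sosp_constraint_u1}) and~(\ref{eqn:sosp_constraint_u2}). By the sufficiency direction of the S-procedure, feasibility of these SOSP constraints certifies that both implications hold, hence that~(\ref{eqn:ctrl_fltr_feas_constraint}) holds for $u^{*}(x) \in \mathcal{U}$, which in turn suffices for the supremum condition.

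For~(\ref{eqn:subset_constraint}), the implication $\phi(x,\hat{\theta}) \ge 0 \implies b_i(x) \ge 0$ must hold for each $i$. A direct application of the S-procedure with hypothesis polynomial $b_i$ and target $\phi$ would give the wrong direction of implication, so I would instead reverse the roles: nonnegativity of $\phi$ on the set $\{b_i \ge 0\}$ is what~(\ref{eqn:sosp_constraint_subset}) enforces, and this is actually stronger than the stated implication in the sense that it establishes $\phi \ge 0$ only where $b_i \ge 0$, which is not quite what we want. The cleaner route is to re-express the containment as: on $\{\phi \ge 0\}$, each $b_i \ge 0$. Applying S-procedure with hypothesis $\phi$ and target $b_i$ yields $b_i - r_i\,\phi \in \Sigma$ with $r_i \in \Sigma$, which is the form that should appear in~(\ref{eqn:sosp_constraint_subset}); I would therefore verify that the stated constraint indeed implements this correct direction, or note the straightforward symmetric version.

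The main obstacle is the handling of the existential quantifier over $u$. Once the vertex-optimality argument converts it into a finite set of universally quantified implications, the rest of the proof is a fairly mechanical application of the S-procedure. A secondary subtlety is ensuring that $u^{*}(x)$ is measurable/well-defined on the boundary $\{d = 0\}$, but this set has measure zero and either vertex works there, so the two constraints~(\ref{eqn:sosp_constraint_u1}),~(\ref{eqn:sosp_constraint_u2}) jointly cover all of $\mathcal{X}\times\Theta$ without ambiguity. Finally, I would remark that the extension to general $m$ replaces the two regions with $2^m$ sign-pattern regions and correspondingly yields $2^m$ SOSP constraints of the same form, with the proof otherwise unchanged.
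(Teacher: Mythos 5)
Your proposal is correct and follows essentially the same route as the paper: the paper's own proof is a one-line appeal to the S-procedure (Def.~\ref{def:s_procedure}), with the vertex-of-the-box elimination of the supremum over $u$ carried out in the text immediately preceding the theorem, exactly as you reconstruct it. Your concern about the direction of the containment constraint is warranted: as printed, Eq.~(\ref{eqn:sosp_constraint_subset}) reads $\phi - r_i b_i \in \Sigma$ with $r_i \in \Sigma$, which by the S-procedure certifies $b_i(x) \geq 0 \implies \phi(x,\theta) \geq 0$, i.e., $\mathcal{X}_{\text{safe}} \subseteq \{\phi \geq 0\}$ --- the converse of Eq.~(\ref{eqn:subset_constraint}). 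The form your argument derives, $b_i - r_i \phi \in \Sigma$, is the one that actually certifies $\{\phi \geq 0\} \subseteq \{b_i \geq 0\}$; the paper's terse proof does not address this point, so your version is the one that correctly closes the argument.
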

\begin{proof}
If there exists such $s_{1:4}, r_i$, then by Def.~\ref{def:s_procedure}, Thm.~\ref{thm:valid_raCBF} is satisfied. 
\end{proof}

\begin{algorithm}[ht]
\caption{Synthesis of raCBFs via a sequence of three SOSPs} \label{alg:raCBF}
\begin{algorithmic}

\State Start with {\small $\phi^{(0)}, \theta_l^{(0)}, \theta_u^{(0)}, c^{(0)} = \infty, i = 0$, converged=False. Denote the objective value as $c^{(i)}$. }
\While{not converged} 

\State {\small $\text{Ellipsoid } \mathcal{E}_{\delta}^{(i)} \gets \textbf{P1}(\phi^{(i)}, \theta_l^{(i)}, \theta_u^{(i)})$}

\State {\small $\text{Multipliers } s_{i}(x, \theta), r_{i}(x, \theta) \gets \textbf{P2}(\phi^{(i)}, \theta_l^{(i)}, \theta_u^{(i)}, \mathcal{E}_{\delta}^{(i)})$}

\State {\small $\phi^{(i+1)},\theta_l^{(i+1)}, \theta_u^{(i+1)}, c^{(i+1)} \gets  \textbf{P3}(s_{i}(x, \theta), r_{i}(x, \theta), \mathcal{E}_{\delta}^{(i)}$)}

\If{{\small $\abs{c^{(i+1)} - c^{(i)}} \leq \epsilon$ }}
    \State {\small converged $\gets$ True}
\EndIf
\EndWhile
\end{algorithmic}
\end{algorithm}
\setlength{\textfloatsep}{2pt}

\subsection{raCBF synthesis}\label{sec:methodology:synthesis}

In this section, we provide an algorithm for synthesizing an raCBF given the partially known system (Eq.~\ref{eqn:sys_dynamics}), the safety definition (Eq.~\ref{eqn:sys_safety_def}), and a feasible initialization for the raCBF. We would like to find $\phi(x, \hat{\theta})$ satisfying Theorem~\ref{thm:valid_raCBF}, where $\phi$ is now a polynomial of fixed degree over $x, \theta$ with variable coefficients. However, with $\phi$ containing optimization variables, Eq.~(\ref{eqn:sosp_constraint_u1}),~(\ref{eqn:sosp_constraint_u2}) are no longer convex, but bilinear. This requires us to apply bilinear alternation on these constraints, which form the basis of our Alg.~\ref{alg:raCBF}. We optimize the multipliers $s_i(x, \theta), r_i(x, \theta)$ in Phase 2 and the raCBF $\phi(x,\hat{\theta})$ in Phase 3 and continue back and forth. 

\begin{align}
    & \textbf{Phase 2}\\ 
    & \min_{s_i, r_i \in \Sigma, t \in \mathbb{R}} t \nonumber\\
    &c + d \bar{u} - s_1 d - s_2 (\phi - \beta^{-}) \in \Sigma \tag*{Ctrl fltr feas. constrs} \\ 
    &c + d \ubar{u} + s_3 d - s_4 (\phi - \beta^{-}) \in \Sigma \nonumber \\
    &\phi - r_i b_i \in \Sigma \;\;,\; \forall i=1,\ldots,t \tag*{Subset constraints}\\
    &t - \phi - s_5 (\delta - x^{\top} P x) \in \Sigma \tag*{Ellipsoid constraint}\\
    &1 - \phi(0, \theta) \in \Sigma \tag*{Anchor constraint}
\end{align}

\begin{align}
    & \textbf{Phase 3}\\ 
    & \min_{\phi, t, \theta_l, \theta_u}ƒ t + \theta_l - \theta_u \nonumber \\
    &c + d \bar{u} - s_1 d - s_2 (\phi - \beta^{-}) \in \Sigma \tag*{Ctrl fltr feas. constrs} \\ 
    &c + d \ubar{u} + s_3 d - s_4 (\phi - \beta^{-}) \in \Sigma \nonumber \\
    &\phi - r_i b_i \in \Sigma \;\;,\; \forall i=1,\ldots,t \tag*{Subset constraints}\\
    &t - \phi - s_5 (\delta - x^{\top} P x) \in \Sigma \tag*{Ellipsoid constraint}\\
    &1 - \phi(0,\theta) \in \Sigma \tag*{Anchor constraint}
\end{align}

While bilinear alternation alone would yield a {\it valid} raCBF, we'd also like the raCBF to capture a large invariant set. A larger invariant set means the system is less restricted, providing the nominal controller more flexibility. Thus, our optimization needs to also maximize the size of the invariant set $\mathcal{I}^{r}_{\hat{\theta}}$. To do this, we follow~\cite{dai2022convex}. To increase its size, we perform several steps in each iteration. First, we find the largest ellipsoid $\mathcal{E}_{\delta} \defeq \{x \mid x^{\top}P x \leq \delta\}$ contained within $\mathcal{I}^{r}_{\hat{\theta}}$ (solve Phase 1 for the maximum ellipsoid radius $\delta$). Then, we optimize the raCBF so that the margin between $\mathcal{E}_{\delta}$ and $\mathcal{I}^{r}_{\hat{\theta}}$ is as large as possible (Phase 2). This effectively increases the size of $\mathcal{I}^{r}_{\hat{\theta}}$ over the iterations. We also add an ``anchor constraint'', which limits the maximum value of the raCBF. This helps us avoid a trivially optimal raCBF with large values but a small invariant set.

\begin{align}
    &\textbf{Phase 1} \\ 
    & \max_{\delta\in\mathbb{R}, s_0 \in \Sigma} r \nonumber \\
    & \text{s.t.}\;\; \delta - x^{\top} P x - s_0 \phi \in \Sigma \nonumber  
\end{align}

Another challenge posed by synthesis is the requirement for a feasible initialization. That is, we need to supply a raCBF already satisfying Eq.~(\ref{eqn:ctrl_fltr_feas_constraint}),~(\ref{eqn:subset_constraint}) when using bilinear alternation. Generally, this initialization is something that can be computed straightforwardly, but has a trivially small invariant set. Then, the idea is for synthesis to enlarge the invariant set, while maintaining validity, until it has a reasonable size. We can easily provide such an initialization if the robust range $\Theta$ is small. In that case, we can use the control Lyapunov function (CLF) corresponding to a linear quadratic regulator (LQR) which stabilizes to an equilibrium point in $\mathcal{X}_{\text{safe}}$. This initialization will intrinsically have some robustness to perturbation~\cite{chen2015robustness}. However, in the general case, the robust range $\Theta$ will have moderate size. 

To handle this, we consider a small robust range at initialization, so that we may use our LQR-based CLF initialization. Then, we maximize this robust range over the iterations (Phase 3). In our experiments, the ultimate robust range produced by our algorithm is always equal to the desired range. However, we have found, for example, that the algorithm is unable to increase the robust range to double the desired size. This is not a significant issue though, since robust control is generally not recommended when the desired range is large~\cite{lavretsky2012robustAdaptive}.

\begin{table}[h]
\centering
\smallskip
\smallskip
\caption{}
    \begin{tabular}{|c||c|c|c|}
         \hline
         \multicolumn{4}{|c|}{ Results for Feedback Simulations}  \\
         \hline
         & Toy 2D & Cartpole & Quadrotor \\
         \hline 
         \begin{tabular}{@{}c@{}} Safety  Test \\ (\% safe trajectories) \end{tabular} & 100\% & 100\% & 100\%\\
         \hline
         \begin{tabular}{@{}c@{}}Performance  Test \\ (\% improvement of raCBF \\ over rCBF on task metric) \end{tabular} & -  & 7.014\% &  54.953\%\\
         \hline
    \end{tabular}
    \label{tab:primary_results}
\end{table}

\section{Results}\label{sec:results}

In this section, we synthesize raCBFs for three different systems: toy 2D, cartpole, and quadrotor. We show that our algorithm is quite general, producing valid raCBFs for a variety of dynamics and safety specifications. We also compare closed-loop performance in simulation for our robust-adaptive controller and a baseline robust controller~\cite{kang2023verification}. We find that our robust-adaptive safety filter ensures $100\%$ safety {\it while also} interfering less with the nominal controller. This means it achieves significantly better performance on combined safety-stabilization and safety-tracking tasks. Finally, we also show that our method scales better than the baseline, handling systems of 7D as opposed to 2D.      

In particular, we run two sets of experiments with the simulated closed-loop system:
\begin{enumerate}
    \item {\bf Safety test}: ensure the system does not leave the invariant set under an adversarial nominal controller, which tries to drive the system out. Trials vary the initial state and the true parameter(s) value(s).  
    \item {\bf Performance test}: measure stabilizing/tracking performance on combined safety-(stabilizing/tracking) task. Compare with baseline, a robust CBF~\cite{kang2023verification}. Trials vary  the true parameter(s) value(s). The cartpole's task metric is time to reach the goal and the quadrotor's task metric is the average trajectory tracking error.
\end{enumerate}
We use 200 random trials for each test. Results are listed in Table~\ref{tab:primary_results} and analyzed further in the following subsections. All the experimental hyperparameters can be found in the Appendix (Sec.~\ref{sec:appendix}). To solve the SOSPs, we use SOSTOOLS \cite{sostools} with Mosek \cite{mosek} on a 3.00GHz Intel(R) Core(TM) i9-9980XE CPU with 16 cores and 126GB of RAM. For all examples, the initial raCBF is derived from an LQR-based CLF that stabilizes the system to the origin. Specifically, we use the solution $S$ of the algebraic Riccati equation of the system linearized about the origin to define the CLF $V(x) = x^T S x$ and the raCBF $\phi^{(0)}(x) = \epsilon - V(x)$, where $\epsilon \in \mathbb{R}_{>0}$ is chosen experimentally. 


\begin{figure}[th]
    \centering
    \smallskip
    \smallskip
    \smallskip
    \includegraphics[width=0.4\textwidth]{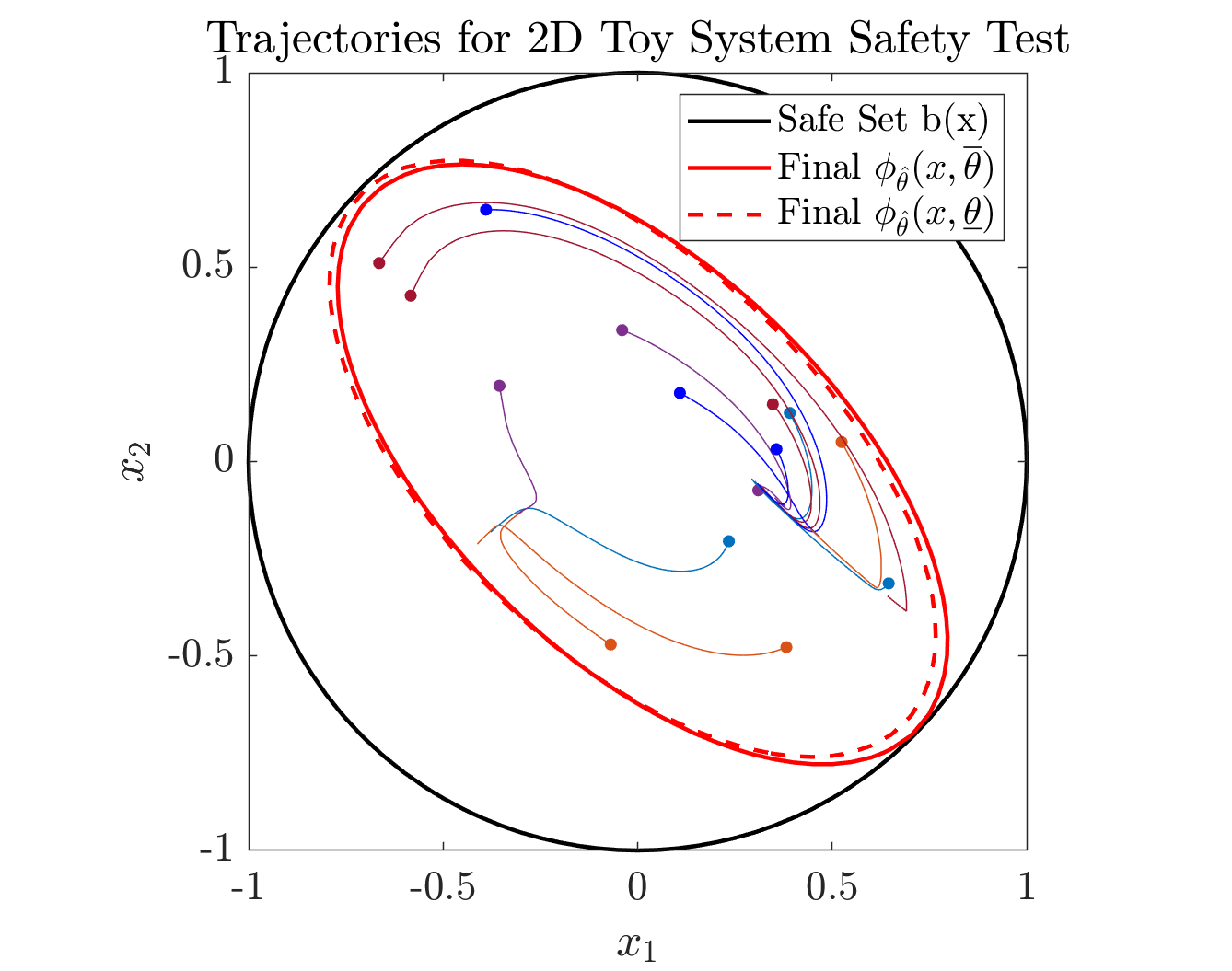}
    \caption{Random trajectories generated by the safety test. Observe that all stay within the invariant set (in red).\vspace{1em}}
    \label{fig:toy_safety_experiment}
\end{figure}

\subsection{Toy 2D}\label{sec:results:toy2d}
We use the 2D toy system from \cite{toy2dsystem}:
\begin{equation}
    \begin{aligned}
        \dot x = \begin{bmatrix} x_2 - x_1^3 \\ 0 \end{bmatrix} - \begin{bmatrix}-x_1^2 \\ 0\end{bmatrix} \theta + \begin{bmatrix} 0 \\ 1 \end{bmatrix} u 
    \end{aligned}
\end{equation}
with $\Theta \in [0.8, 1.5]$ and $u \in [-2, 2]$. The safety definition is $\mathcal{X}_{\text{safe}} = \{ x \mid b(x) \defeq 1 - (x_1^2 + x_2^2) \ge 0 \}$. 

Our synthesized raCBF attains $100\%$ safety on this system. Observe in Fig.~\ref{fig:toy_safety_experiment} that the trajectories under the safety-filtered adversarial controller never leave the system despite the parametric uncertainty. 

\begin{figure}[htbp]
    \centering
    \includegraphics[width=0.42\textwidth]{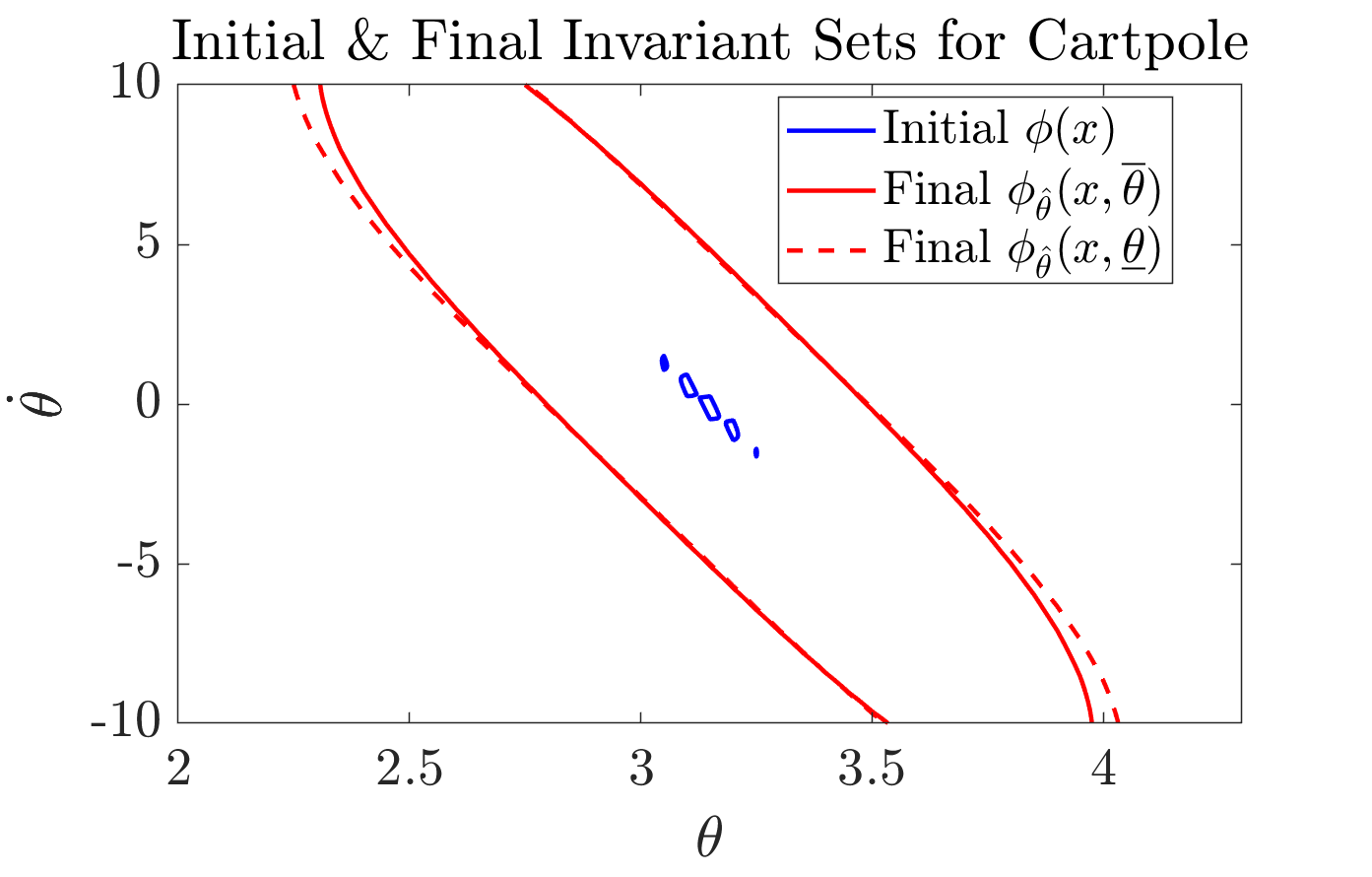}
    \caption{Note the significant increase in size from the initial to final invariant sets. Also, observe that the final raCBF depends on $\hat{\theta}$ in an intuitive way.\vspace{1em}}
    \label{fig:cartpole_invariant_sets}
\end{figure}

\subsection{Cartpole}\label{sec:results:cartpole}

For the cartpole system (Fig.~\ref{fig:systems_side_by_side}), we transform the standard trigonometric dynamics~\cite{cartpole} to polynomial form by redefining the states as $[\dot x, x, {\small \dot\psi}, \sin(\psi), \cos(\psi) +1]$. For these new states, we require $\sin^2(\psi) + \cos^2(\psi) = 1$, which we incorporate into the SOSP using the S-procedure~\cite{posa2016trigsub}. The unknown parameter is the friction coefficient at the joint. We let $\Theta \in [0.28, 0.31]$, following \cite{barto1983neuron} and $u \in [-5, 5]$ N. The safety definition confines the pole angle to be small: $\mathcal{X}_{\text{safe}} \defeq \{\sin^2(\frac{\pi}{3}) - \sin^2(\psi) \ge 0 \}$. 


In Fig.~\ref{fig:cartpole_invariant_sets}, we plot the invariant set along $(\theta, \dot{\theta})$. First, observe that the shape of the set is sensible: it is an ellipsoid with primary axis $\dot{\theta} = -\theta$, meaning it considers the pole swinging to the vertical as safe and vice versa. Secondly, note that the synthesized raCBF depends meaningfully on $\hat{\theta}$. Comparing the invariant sets for $\phi(x, \ubar{\theta})$ (low joint friction) and $\phi(x, \bar{\theta})$ (high joint friction), we can see that the former set is larger. This makes sense, since for low joint friction, the pole is more responsive and easily controlled. Finally, notice that the initial invariant is almost negligibly small, but the final set is quite large. We conclude that our algorithm is very effective at growing the invariant set. 

As for the closed-loop simulations, raCBF attains $100\%$ safety, even with an adversarially unsafe nominal controller. We also synthesize a robust CBF according to~\cite{kang2023verification}. In the performance test, we implement a nominal controller which controls the cart to reach position $x_{goal}$. Then, the nominal controller is either layered with an raCBF-based or an rCBF-based safety filter, both of which prevent the pole from tipping.  

Overall, the raCBF controller reaches the goal $7.014\%$ faster. One reason for this is that the raCBF has a larger invariant set (Fig.~\ref{fig:raCBF_vs_rCBF_invariant_sets}), which allows larger pole angles during safe operation (Fig.~\ref{fig:cart_reach_trajectories_compare}) and hence larger cart acceleration. The difference in invariant set size can be explained as follows: while both raCBF and rCBF are robust to maximum estimation error, the raCBF is able to account for this error in a more moderate way, since it assumes parameter estimation is possible. 

The second reason raCBF reaches the goal faster is its use of parameter estimation. The parameter adaptation law renders the raCBF {\it selectively conservative}, whereas the rCBF is always conservative (always operating under worst-case assumptions). To see this, observe Fig.~\ref{fig:cart_reach_theta_psi}. The law adapts the parameters toward values that increase conservatism when the state becomes increasingly unsafe. This happens at times $\approx 3, 7, 9$ seconds, when the pole changes direction as the cart oscillates around the goal. This pivot causes the parameter estimate to grow to $\bar{\theta}$, which corresponds to the smallest and thus most conservative invariant set ($\mathcal{I}^{r}_{\bar{\theta}}$) among the family of invariant sets $\{ \mathcal{I}^{r}_{\hat{\theta}}, \hat{\theta} \in \Theta \}$. On the other hand, when the state is relatively safe, minimal conservatism is applied. This is the case when the pole is static. At those points, the parameter estimate is  $\ubar{\theta} = 0$, corresponding to the largest invariant set in the family of invariant sets.

\begin{figure}[htbp]
    \centering
    \smallskip
    \smallskip
    \includegraphics[width=0.35\textwidth]{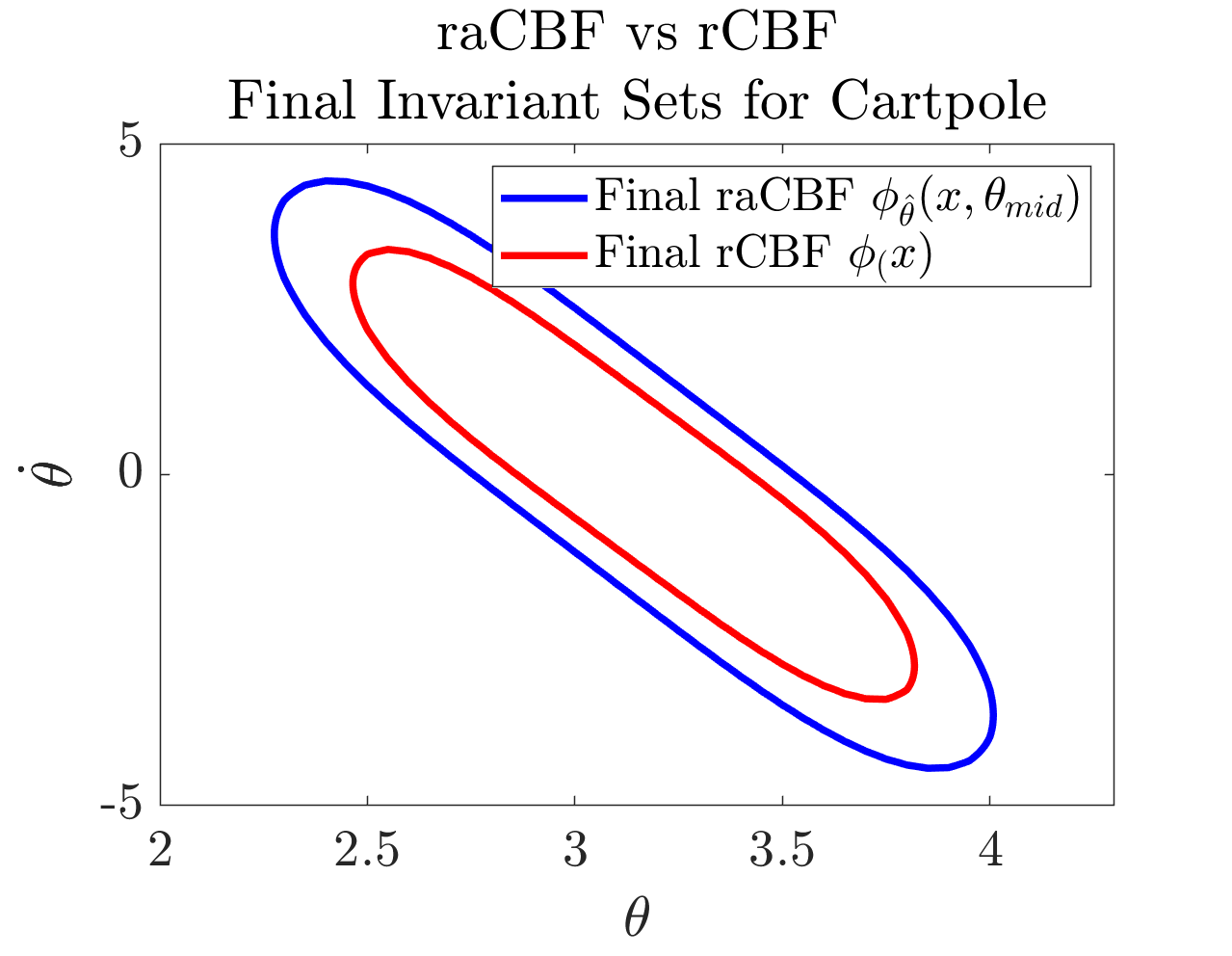}
    \caption{Notice that raCBF's invariant set is much larger than the baseline's, which accounts for the difference in the performance test.}\label{fig:raCBF_vs_rCBF_invariant_sets}
\end{figure}

\begin{figure}[htbp]
    \centering
    \includegraphics[width=0.45\textwidth]{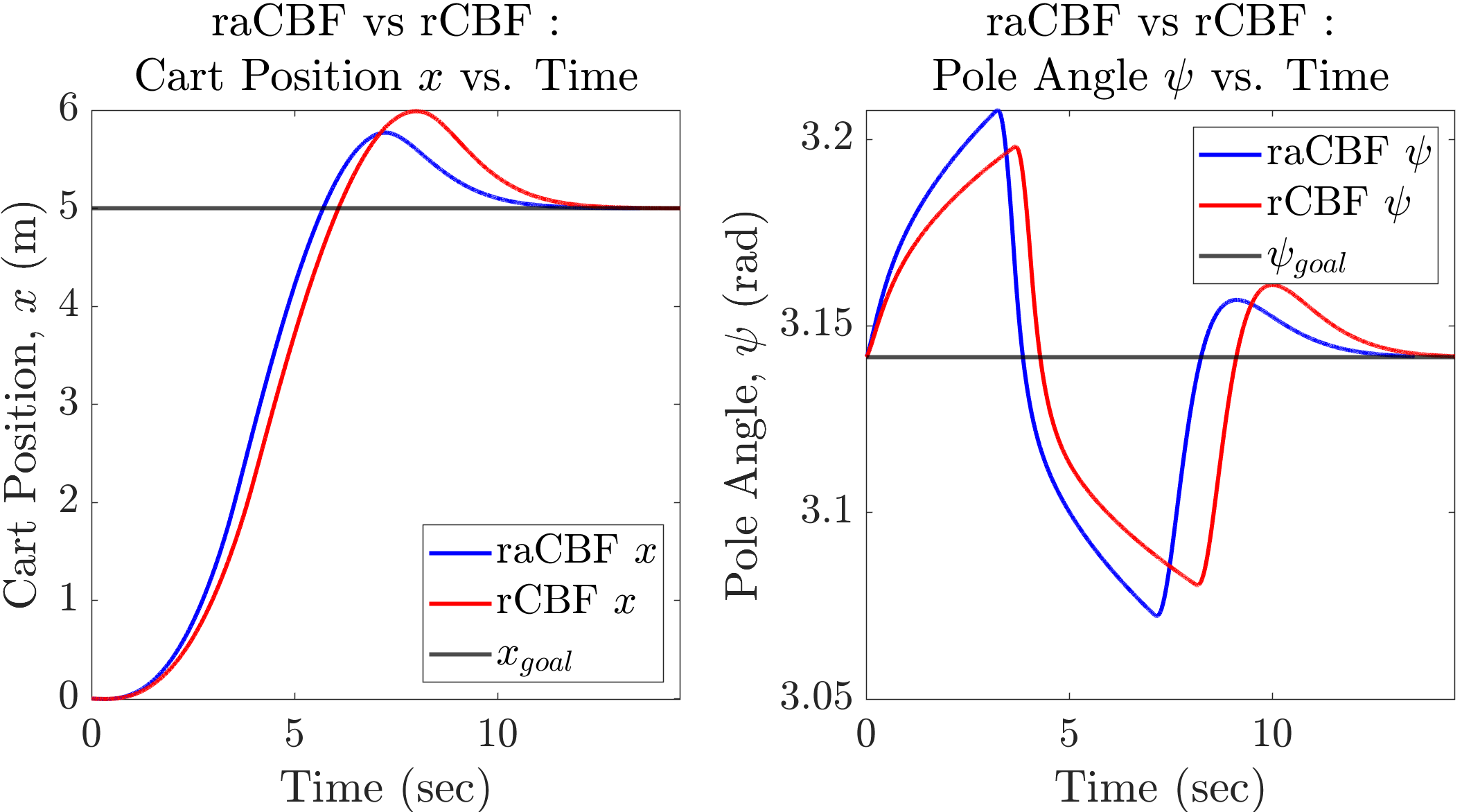}
    \caption{Detailed analysis of one trial from cartpole's performance test.\vspace{1em}}
    \label{fig:cart_reach_trajectories_compare}
\end{figure}

\begin{figure}[htbp]
    \centering
    \includegraphics[width=0.35\textwidth]{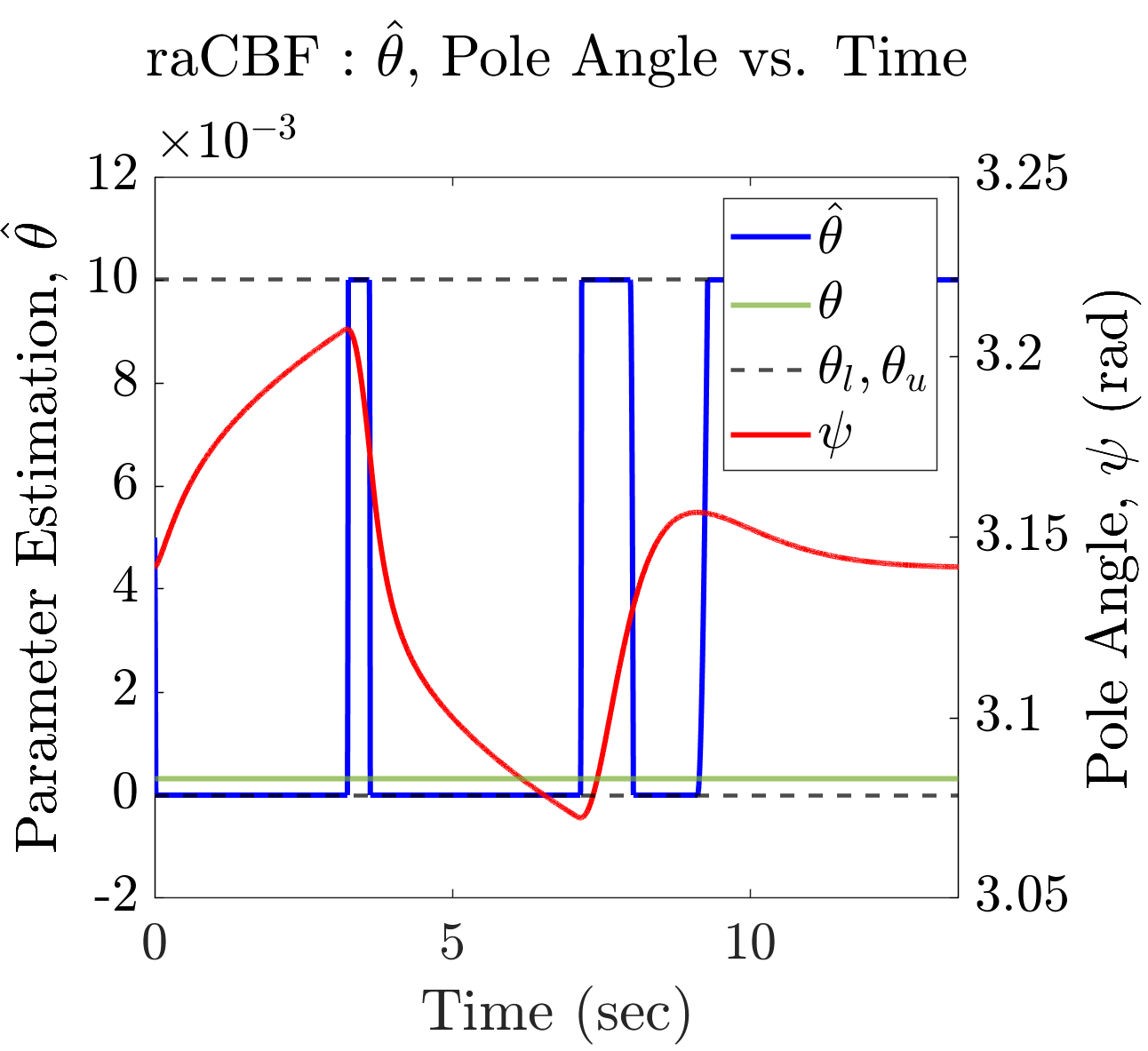}
    \caption{Analysis of parameter estimation for cartpole's performance test.}
    \label{fig:cart_reach_theta_psi}
\end{figure}

\subsection{Quadrotor}\label{sec:results:quadrotor}
This system is a planar quadrotor (Fig.~\ref{fig:systems_side_by_side}) that we transform from the typical trigonometric dynamics~\cite{luuk2011quad2d} redefining the states as $[\dot x, \dot y, \dot \psi, x, y, \sin(\psi), \cos(\psi) - 1]$. We handle the transformation in much the same way as the cartpole (Sec.~\ref{sec:results:cartpole}). The system has two unknown parameters, which are the drag coefficients in $x$ and $y$ that range from $[0, 0.01]$, roughly following the physical values from~\cite{ventura2018high}. The propeller thrusts are limited to $[-4, 4]$ mg. The safety definition enforces the system to avoid walls on the left and right: $\mathcal{X}_{safe} = \{x \mid x \in [-1, 1] \}$.    

Our synthesis is able to handle a system of this moderate size (7D), producing an raCBF that ensures safety in $100\%$ of the randomized trials of the safety test (Table~\ref{tab:primary_results}). The raCBF also has a large invariant set, which is evidenced by how it can track trajectories closely to the wall (Fig.~\ref{fig:quad2d_track_raCBF_vs_rCBF}). On the other hand, the robust CBF from~\cite{kang2023verification} cannot be applied to a system of this size. However, in order to produce some point of comparison, we transfer the key parts of their formulation into our SOSP algorithm. Essentially, this just modifies the control filter feasibility constraints in phases 2 and 3 of our algorithm. This allows us to produce a result after six hours of optimization, which we use in the following performance test.

In the performance test, both controllers attempt to track a trajectory that crosses the walls. The nominal tracking controller is gain-scheduled LQR. As we can see, our raCBF offers an $54.95\%$ improvement over the robust controller. In Fig.~\ref{fig:quad2d_track_raCBF_vs_rCBF}, we note that while our raCBF allows the quadrotor to approach the wall, the robust CBF, which has a much smaller invariant set in the $x$ dimension, restricts the quadrotor to around $x=0$. We can conclude that raCBF performs better and currently scales better than rCBF. More work will be required to get rCBF to scale moderately. 

\begin{figure}[tp]
    \centering
    \smallskip
    \includegraphics[width=0.45\textwidth]{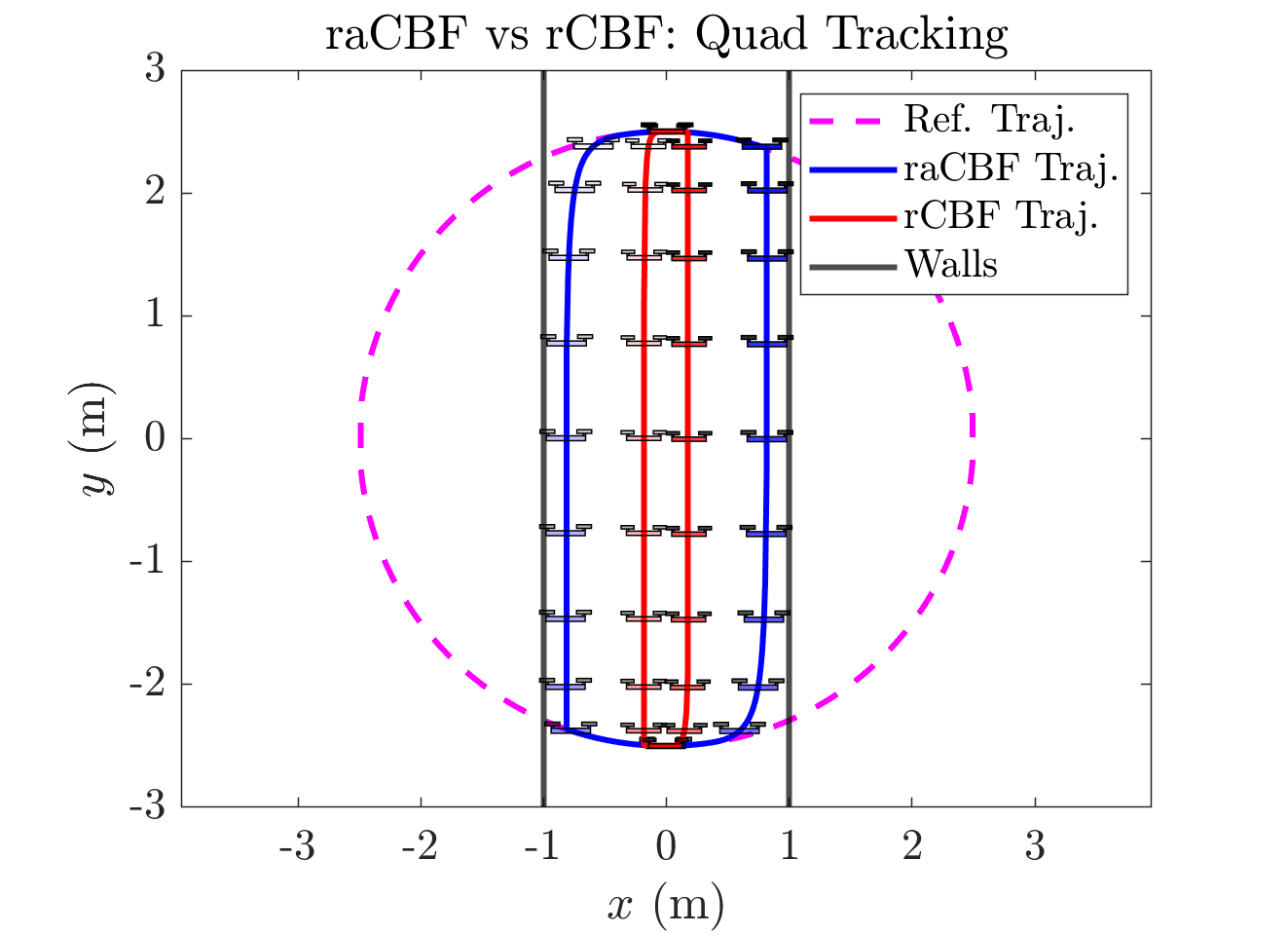}
    \caption{raCBF enables closer trajectory tracking than rCBF and is therefore more performant.\vspace{1em}}
    \label{fig:quad2d_track_raCBF_vs_rCBF}
\end{figure}

\section{Conclusion}\label{sec:conclusion}

In this paper, we proposed the first verification and synthesis methods for adaptive-type safe controllers for uncertain systems. We proved that the constraints characterizing a valid raCBF can be written as an SOS program, and showed we can devise a multi-phase algorithm to synthesize an raCBF. Then, in our experiments, we illustrated that our algorithm is generic and scalable with three examples with varied dynamics, dimensions (up to 7D), and safety specifications. We empirically confirmed our theoretical safety guarantees by showing that $100\%$ of trajectories under the safe controller are safe. Finally, we showed that our raCBF provides up to $\approx 55\%$ task performance improvement over the baseline. In the future, we are interested in extending synthesis to systems with time-varying unknown parameters. 





\section*{Acknowledgements}

The authors would like to thank Hongkai Dai at Toyota Research Institute and Weiye Zhao at Carnegie Mellon University for their technical guidance and paper revisions. This work was supported by the Qualcomm Innovation Fellowship awarded to Simin Liu and National Science Foundation under Grant No. 2144489.



\AtNextBibliography{\small}

\printbibliography


\section{Appendix}\label{sec:appendix}


~
\begin{table}[h!]
\centering
    \caption{}
    \label{tab:comp_time}
    \begin{tabular}{|c||c|c|c|}
         \hline
         \multicolumn{4}{|c|}{Computation Lengths for Synthesis}  \\
         \hline
          & 2D Toy & Cartpole & Quadrotor  \\
         \hline
         Total Time (sec) & 23 & 655 & 68,934\\
         \hline
         Total Iterations & 11 & 159 & 337\\
         \hline
         Avg. P1 Time (sec) & 0.1 & 0.12 & 0.87\\
         \hline
         Avg. P2 Time (sec) & 1 & 2.8 & 74\\
         \hline
         Avg. P3 Time (sec) & 0.8 & 1.2 & 130\\
         \hline
    \end{tabular}
\end{table}


\begin{table}[h!]
\centering
    \caption{}
    \label{tab:experiment_details}
    \begin{tabular}{|c||c|c|c|}
         \hline
         \multicolumn{4}{|c|}{Synthesis \& Estimation Hyperparameters}  \\
         \hline
          & 2D Toy & Cartpole & Quadrotor  \\
         \hline
         Deg. $\phi(x,\hat{\theta})$ & 3 & 3 & 3\\
         \hline
         $\beta^{-}$ & -1.0 & -0.5  & -0.1\\
         \hline
         $\alpha$ & 0.1 & 0.01 & 0.01\\
         \hline
         $\gamma$ & 10.0 & 10.0 & 10.0\\
         \hline
         $\nu(\rho)$ & $\text{atan}(\rho)+1$ & $\text{atan}(\rho)+1$ & $\text{atan}(\rho)+1$ \\
         \hline
         $\eta$ & 100 & 100 & 1000 \\
         \hline
         $\sigma$ & 1 & 1 & 1 \\
         \hline
         $\xi$ & 1 & 1 & 1 \\
         \hline
         $\rho$ range & [0,10] & [0,10] & [0,10] \\
         \hline
         
    \end{tabular}
\end{table}


\subsection{System Parameters}

{\small
\noindent $\bullet$ Cartpole: $m_{c}=1$kg, $m_{p}=0.1$kg, $l=0.5$m \\
\noindent $\bullet$ Quadrotor: $m=0.486$kg, $r=0.25$m, $I=0.00383$kg$\cdot$m$^2$ 

\subsection{Final raCBF}
\noindent {\bf 2D toy:} $\phi(x,\hat{\theta}) = (0.3666)\hat{\theta}^2 + (0.07823)\hat{\theta}x_1 + (0.0007759)\hat{\theta}x_2 \\- (1.673)x_1^2 - (2.005)x_1x_2 - (1.726)x_2^2 - (0.8380)\hat{\theta} - (0.09071)x_1 \\- (0.0007912)x_2 + 1.098$\\
\\
\noindent {\bf Cartpole:} $\phi(x,\hat{\theta}) =  (23.24)\hat{\theta}^2 + (0.0002120)\hat{\theta}z_1 - (22.73)\hat{\theta}z_2 \\- (0.04065)x_3^2 + (1.136)x_3z_1 - (10.06)z_1^2  - (8.015)z_2^2  - (12.98)\hat{\theta} \\+ (10.27)z_2 + 2.785$ \\
\\
\noindent {\bf Quadrotor:} $\phi(x,\hat{\theta}) = (-2.663)\hat{\theta}_1^2 - (0.06829)\hat{\theta}_1\hat{\theta}_2 \\+ (0.05703)\hat{\theta}_1z_2 + (0.3199)\hat{\theta}_2^2 - (0.08092)\hat{\theta}_2z_2 - (0.7239)x_1^2 \\+ (0.03167)x_1x_3 - (1.138)x_1x_4 + (3.074)x_1z_1  - (0.01136)x_3^2 \\+ (0.02167)x_3x_4 - (0.1209)x_3z_1  - (1.462)x_4^2 + (2.118)x_4z_1 \\- (5.706)z_1^2 - (5.308)z_2^2 + (0.06482)z_2 + 0.9881$
}
    
    

\end{document}